%
%
\documentclass[12pt,a4paper]{article}
\usepackage[utf8]{inputenc}
\usepackage[T1]{fontenc}
\usepackage[ngerman,english]{babel}
\usepackage{graphicx}
\usepackage{subfig}
\usepackage{epstopdf}
\usepackage{float}
\usepackage{latexsym}
\usepackage{amsmath,amssymb,amsthm}
\usepackage{amsfonts}
\usepackage[normalem]{ulem}
\usepackage{pdfpages}
\usepackage[font=footnotesize]{caption}
\usepackage{setspace}
\usepackage{paralist}
\usepackage{enumitem}
\usepackage{fancyhdr}
\usepackage{geometry}
\usepackage[numbers]{natbib}
\usepackage{float}
\usepackage{algorithm}
\usepackage{algpseudocode}
\usepackage{soul}

\pagestyle{empty}

\newtheorem{Theorem}{Theorem}[section]
\newtheorem{Lemma}[Theorem]{Lemma}
\newtheorem{Corollary}[Theorem]{Corollary}

\theoremstyle{definition}
\newtheorem{Definition}[Theorem]{Definition}

\newtheorem{Remark}[Theorem]{Remark}
\newtheorem{Example}{Example}

\newcommand{\C}{\mathbb{C}}

\newcommand{\N}{\mathbb{N}}
\newcommand{\K}{\mathbb{K}}
\newcommand{\ord}{\textrm{ord}}

\newcommand{\cc}{\mathcal{C}(F)}

\newcommand{\yy}{\tilde{y}}
\newcommand{\cP}{\mathcal{P}}
\algnewcommand\algorithmicinput{\textbf{Input:}}
\algnewcommand\INPUT{\item[\algorithmicinput]}
\algnewcommand\algorithmicoutput{\textbf{Output:}}
\algnewcommand\OUTPUT{\item[\algorithmicoutput]}
\newcommand\para{\vspace*{2mm}}
\newcommand\mult{\mathrm{mult}}

\allowdisplaybreaks

\makeatletter
\newenvironment{breakablealgorithm}
{
	\begin{center}
		\refstepcounter{algorithm}
		\hrule height.8pt depth0pt \kern2pt
		\renewcommand{\caption}[2][\relax]{
			{\raggedright\textbf{\ALG@name~\thealgorithm} ##2\par}%
			\ifx\relax##1\relax 
			\addcontentsline{loa}{algorithm}{\protect\numberline{\thealgorithm}##2}%
			\else 
			\addcontentsline{loa}{algorithm}{\protect\numberline{\thealgorithm}##1}%
			\fi
			\kern2pt\hrule\kern2pt
		}
	}{
		\kern2pt\hrule\relax
	\end{center}
}
\makeatother

\begin{document}\selectlanguage{english}

\pagestyle{fancy}
\fancyhf{}
\renewcommand{\headrulewidth}{0pt}

\title{Solving First Order Autonomous Algebraic Ordinary Differential Equations by Places}
\author{Sebastian Falkensteiner\thanks{Research Institute for Symbolic Computation (RISC), Johannes Kepler University Linz, Austria.
		Email: falkensteiner@risc.jku.at} \and J.Rafael Sendra\thanks{Research Group ASYNACS. Dpto. de Física y Matemáticas, Universidad de Alcalá, Madrid, Spain.
		Email: Rafael.sendra@uah.es}}
\date{\empty}
\maketitle

\fancyfoot[C]{\thepage}

\begin{abstract}
	Given a first-order autonomous algebraic ordinary differential equation, we present a method for computing formal power series solutions by means of places. We provide an algorithm for computing a full characterization of possible initial values, classified in terms of the number of distinct formal power series solutions extending them. In addition, if a particular initial value is given, we present a second algorithm that computes all the formal power series solutions, up to a suitable degree, corresponding to it. Furthermore, when the ground field is the field of the complex numbers, we prove that the computed formal power series solutions are all convergent in suitable neighborhoods.
\end{abstract}

\noindent \textbf{keyword}
Algebraic autonomous differential equation,
algebraic curve,
local pa\-ra\-me\-tri\-za\-tion, place,
formal power series solution, analytic solution.

\section{Introduction} \label{sec:introduction}
A first-order algebraic ordinary differential equation (AODE) is a polynomial relation among a function and its derivatives. If the independent variable does not explicitly appear in the relation, the AODE is called autonomous. In this paper we are looking for formal power series solutions of first-order autonomous AODEs. More precisely, we are looking for truncations of formal power series up to a suitable degree, so that the truncation can be extended uniquely to a formal solution.

The problem of finding formal power series solutions of AODEs has a long history and has been extensively studied in the literature. Denef and Lipshitz consider AODEs of any order and their general formal power series solutions (see \cite{Denef1984}). As a consequence of their results, one can make an ansatz of unknown coefficients, plug it into the differential equation and compare coefficients. We will refer to this process as the direct method. A comparison of this approach with our method is illustrated in Example \ref{ex:direct}.

A method to compute generalized formal power series solutions, i.e. power series with real exponents, can be found in \cite{GrigorievSinger1991}. There, the authors introduce a parametric version of the Newton polygon method and use it to study generalized formal power series solutions of AODEs. For a more detailed insight into this method we refer to \cite{Cano2005} and \cite{CanoFortuny2009}.

In \cite{Aroca2000} the recursive Newton-Puiseux method for first-order (not necessarily autonomous) AODEs is presented. The author gives necessary conditions in each step for the existence of the next coefficient of the Puiseux series solutions; see Proposition 2.1. in \cite{Aroca2000}.
If the degree in $y'$ is one, the method is indeed algorithmic and the existence of Puiseux solutions is guaranteed; see Theorem 3.8. Moreover, the obtained solutions, for the first-order and first-degree AODE, are shown to be  convergent in the sense of Puiseux series; see Theorem 3.10. In addition, these results can be extended to the dicritical case (see \cite{Cano2005}).


There are contributions to obtain other types of solutions such as rational general solutions; see e.g. \cite{feng2004rational}, \cite{feng2006polynomial}. Extensions to algebraic general solutions and first-order non-autonomous AODEs can be found in \cite{aroca2005algebraic}, \cite{RISC4106}, \cite{RISC4381} and \cite{RISC5589}.

In this paper we compute all formal power series solutions of a first-order autonomous AODE, in the sense explained at the beginning of this section. In particular, we use the known bounds for computing places to ensure termination. Up to our knowledge, this cannot be achieved by any of the other methods.
Our approach strongly depends on geometric methods. More precisely, we associate a plane curve to the given first-order autonomous AODE, and we use the places of this curve to analyze the existence, and actual computation, of formal power series solutions.

The structure of the paper is as follows. Section \ref{sec-preliminaries} is devoted to recall the preliminary theory on algebraic curves to be used throughout the paper.
In Section \ref{sec-main} we show that every non-constant formal power series solution defines a place of the associated curve. Moreover, we give a sufficient and necessary condition on a place of the curve to contain in its equivalence class a formal power series solution of the original differential equation.
From this characterization, since all elements in the class are related by means of the substitution of formal power series of order one, the formal power series solution can be determined.
Using this strategy, and choosing as ground field the field of complex numbers, we prove that every formal power series solution is convergent in a certain neighborhood of zero.

Using the previous ideas, and some intermediate results which are presented in Section \ref{sec:critical}, in Section \ref{sec-alg} we outline two algorithms that are illustrated by examples. The first algorithm provides a characterization of all possible initial tuples, i.e. the first two coefficients of the formal power series solutions, of the equation. More precisely, the algorithm decomposes the points of the affine plane into sets such that the number of formal power solutions, centered at each point in the set, is the same. The output of the algorithm is described in finite terms.
On the other hand, for any particular initial tuple, the second algorithm determines the coefficients of all solutions, starting with this initial tuple, up to a suitable order such that it is distinguished from the other solutions.

\section{Preliminaries} \label{sec-preliminaries}
In this section we recall some basic notions on differential algebra and algebraic geometry that will be used throughout this paper. For further details, we refer to \cite{kolchin1973differential}, \cite{ritt1950differential} and \cite{walker1950algebraic}.

Let $\K$ be an algebraically closed field of characteristic zero and $n \in \N$. We denote by $\K\{y\}$ the ring of differential polynomials in $y$ with coefficients in $\K$. A differential polynomial is of order $n \geq 0$ if the $n$-th derivative $y^{(n)}$ is the highest derivative appearing in it.

In the following, we consider irreducible first-order autonomous algebraic ordinary differential equations (AODEs) of the form
\begin{equation}\label{EQ-AODE}
F(y,y')=0,
\end{equation}
where $F \in \K[y,y'] \setminus \K[y]$. For technical reasons, that will be clear below and in subsequence sections, we will also assume that $F$ is not of the form $y'-\lambda$ with $\lambda\in \K$. We observe that none of the assumptions implies any loss of generality.
Indeed, if $F$ is reducible, one might consider each irreducible component of the factorization and proceed accordingly. If $F$ does not depend on the first derivative, it is just an algebraic equation, and finally, the case $y'=\lambda$ is trivial.


Our goal is to decide the existence and, in the positive case, actually determine the formal power series solutions of (\ref{EQ-AODE}), in the sense explained in Section \ref{sec:introduction}. Our strategy to approach this problem consists in analyzing an algebraic curve associated to the given AODE. This algebraic curve is introduced as follows.
Let $F$ be as in (\ref{EQ-AODE}). We consider $y$ and $y'$ as independent variables, let us say $y$ and $z$. So, $F(y,z)\in \K[y,z]$. $F$ is non-constant because $F \notin \K[y]$, and hence it defines a plane affine algebraic curve, say $\mathcal{C}(F)$, over $\K$; namely,
$$\mathcal{C}(F)=\{(a,b) \in \K^2 ~|~ F(a,b)=0\}.$$
We call $\mathcal{C}(F)$ the \textit{corresponding curve} of $F$. Observe that because of our assumptions $\cc$ is irreducible and it is neither a vertical line nor the $y$-axis.


Let $\K[[t]]$ be the ring of formal power series in the indeterminate $t$ and coefficients in $\K$. Assume that
\begin{equation*} \label{solution}
\tilde{y}=\sum_{i \geq 0}c_it^i \in \K[[t]]
\end{equation*}
is a formal power series solution of (\ref{EQ-AODE}).
Then, substituting $0$ for $t$, we get that $(c_0,c_1)\in \cc.$ So, we get a first necessary condition on the initial tuples of the solution. Furthermore, the constant solutions of (\ref{EQ-AODE}) are the first coordinates of the intersection points of $\cc$ and the line $z=0$.
In order to deal with the other coefficients of the non-constant formal power solutions, one may proceed as follows. Let $(c_0,c_1)\in \cc$ and let the coefficients $c_i$, with $i>1$, in $\yy$, be undetermined.
Then, for every $k \geq 1$, there exists a differential polynomial $R_k \in \K\{y\}$ of order at most $k$ such that
\begin{equation*}
F^{(k)}=S_F \cdot y^{(k+1)}+R_k,
\end{equation*}
where $S_F=\frac{\partial F}{\partial y^{'}}$ is the separant of $F$ (see e.g.  \cite{ritt1950differential}[page 6]).
Then, if $S_F(c_0,c_1) \neq 0$, we can use the above formula to recursively determine $c_k$, for $k>1$, namely
\begin{equation}\label{eq-separant}
c_{k}=-\dfrac{R_{k-1}(c_0,\ldots,(k-1)!c_{k-1})}{k!S_F(c_0,c_1)}.
\end{equation}
However, if the separant does vanish at $(c_0,c_1)$, a formal power series solution cannot be derived using the previous reasoning.
We observe that, under our assumptions, the set of points $(c_0,c_1)$, for which the above direct method does not work, is finite and non-empty. Indeed, since $F \notin \K[y]$, then $S_F$ is not identically zero.
Furthermore, $\deg_{y,z}(S_F(y,z))=\deg_{y,z}(F)-1$ and since $F$ is irreducible, by B\'ezout's theorem (see e.g. \cite{walker1950algebraic}), the number of common zeros of $\{F(y,z)=0,S_F(y,z)=0\}$, counted with multiplicity, is equal to $\deg_{y,z}(F)(\deg_{y,z}(F)-1)$, which is not $0$ since $F$ is not of the form $y'-\lambda$.


Therefore, in order to complete the analysis still finitely many points on $\cc$ need to be analyzed. For this purpose, we will use the places of $\mathcal{C}(F)$. Let us recall this notion. A pair $\mathcal{P}=(A,B) \in K[[t]]^2$ is called a \textit{local parametrization} of $\mathcal{C}(F)$ if $F(A,B)=0$ holds, and not both power series $A$ and $B$ are constant.
$\textbf{c}=(A(0),B(0))$ is called the \textit{center} of the local parametrization $\mathcal{P}$.
The \textit{order} of $A$ at the center $\textbf{c}$ (similarly for $B$), denoted by $\ord_{\textbf{c}}(A)$, is defined as $\ord(A-A(0))$. $\mathcal{P}$ is called \textit{reducible}, if there exists another local parametrization $\mathcal{P}^* \in \K[[t]]^2$ and $r>1$ such that $\mathcal{P}=\mathcal{P}^*(t^r)$.
Otherwise $\mathcal{P}$ is called \textit{irreducible}. We also recall that the units in $\K[[t]]$ are exactly the power series of order $0$.


Now, let $S \in \K[[t]]$ with $1 \leq \ord(S)<\infty$ and let $\mathcal{P} $ be a local parametrization. By using the usual substitution for formal power series, $\mathcal{P}(S)$ is again a local parametrization of the same curve with the same center. Finally, we recall that two local parametrizations $\mathcal{P}_1,\mathcal{P}_2 \in \K[[t]]^2$ of the same curve $\mathcal{C}$ are called \textit{equivalent} if there exists $S \in \K[[t]]$ with $\ord(S)=1$, such that $\mathcal{P}_1(S)=\mathcal{P}_2.$ This is an equivalence relation. A \textit{place}  is the equivalence class of an irreducible local parametrization of  the curve. The center of the place is  the common center point of all local parametrizations in the equivalence class.  In the sequel, in case of non ambiguity, we will not distinguish between places and irreducible local parametrizations.


We finish this preliminary section recalling some properties and concepts related to places that will be used later.
Let $\mathcal{P}=(a_0+a_nt^n+\cdots,b_0+b_mt^m+\cdots)$ be the representative of a place of $\cc$ with $a_nb_m \neq 0$ and $n,m>0$. Then, the tangent vector $\bar{v}$ of $\mathcal{C}(F)$ at $(a_0,b_0)$ through the place defined by $\mathcal{P}$ is (see Section 5.3. in \cite{walker1950algebraic})
\begin{equation}\label{eq-tg}
\bar{v}=
\begin{cases}
(a_n,b_n), \text{ if } n=m,\\
(a_n,0), \text{ if } n<m,\\
(0,b_m), \text{ if } n>m.
\end{cases}
\end{equation}
We recall that the order of the place $(A,B)$ centered at $\mathbf{c}$ is $\min\{\ord_{\mathbf{c}}(A),\ord_{\mathbf{c}}(B) \}$.
The order of a place is related to the multiplicity of its center as follows. The multiplicity of $\mathcal{C}(F)$ at $\mathbf{c}$ is equal to the sum of the orders of all places of $\cc$ centered at $\mathbf{c}$; see e.g. Theorem 5.8(i),(ii) in \cite{walker1950algebraic}.


A simple point of $\cc$ with a tangent parallel to one of the axes is called a \textit{ramification point}. We distinguish between ramification points w.r.t. $z$, where the tangent is parallel to the $z$-axis, and ramification points w.r.t. $y$, where the tangent is parallel to the $y$-axis.
Finally, let $\mathbf{c}\in \cc$ be simple, and let $(A,B)$ define a place at $\mathbf{c}$; see above the relationship between the multiplicity of a point and the orders of the places centered at the point. Then by (\ref{eq-tg}) we get that $\mathbf{c}$ is a $z$-ramification point if and only if $\ord_{\textbf{c}}(A)>\ord_{\textbf{c}}(B)=1$, and it is a $y$-ramification point if and only if $\ord_{\textbf{c}}(B)>\ord_{\textbf{c}}(A)=1$.

\section{Solution Places} \label{sec-main}
In this section we state our main result that shows how the places of $\cc$ and the non-constant formal power solutions of (\ref{EQ-AODE}) relate. We start showing that non-constant formal power solutions always define places of the associated curve $\cc$.

\begin{Lemma} \label{LEM:para}
	Let $\yy=\sum_{i \geq 0}c_it^i$ be a non-constant solution of (\ref{EQ-AODE}). Then
	\begin{enumerate}
		\item[(i)] $(\tilde{y},\tilde{y}')$ defines a place of $\cc$ with center at $(c_0,c_1)$; and
		\item[(ii)] A place of $\cc$ can be defined by at most one formal power series solution of (\ref{EQ-AODE}).
	\end{enumerate}
	We will refer to $(\yy(t),\yy'(t))$ as the place defined by the solution $\yy(t)$.
\end{Lemma}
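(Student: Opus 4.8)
The plan is to establish (i) by checking the definition of a place directly, and (ii) by a valuation/order argument. For (i), since $\yy$ is a solution of (\ref{EQ-AODE}) we have $F(\yy,\yy')=0$ by definition, so the pair $(\yy,\yy')$ satisfies the curve equation. The first thing to verify is that $(\yy,\yy')$ is a genuine \emph{local parametrization}, i.e.\ that not both components are constant. Here I would use the standing assumptions on $F$: if $\yy'$ were constant, say $\yy'=\lambda$, then either $\lambda=0$, forcing $\yy$ constant (contradicting non-constancy), or $\lambda\neq 0$; but then $\yy=c_0+\lambda t$ and $F(c_0+\lambda t,\lambda)=0$ identically in $t$, which means the polynomial $F(y,\lambda)\in\K[y]$ has infinitely many roots, hence $F(y,\lambda)\equiv 0$, so $(z-\lambda)\mid F$; by irreducibility $F=z-\lambda$ up to a constant, which is excluded by hypothesis. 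Hence $\yy'$ is non-constant and $(\yy,\yy')$ is a local parametrization centered at $(\yy(0),\yy'(0))=(c_0,c_1)$. The remaining point is \emph{irreducibility}: I would argue that $\yy' = \frac{d}{dt}\yy$ already forces the parametrization to be primitive — if $(\yy,\yy')=(\yy^*(t^r),\,(\yy')^*(t^r))$ for some $r>1$, then differentiating $\yy(t)=\yy^*(t^r)$ gives $\yy'(t)=r t^{r-1}(\yy^*)'(t^r)$, which has order $\geq r-1\geq 1$ at $t$, whereas $(\yy')^*(t^r)$ has order a multiple of $r$; comparing the lowest-degree term in $t$ that actually varies, one sees the two expressions for $\yy'$ are incompatible unless $r=1$. (Concretely: the coefficient of $t$ in $\yy'$ is $2c_2$, and in a reducible parametrization every exponent occurring must be divisible by $r$, so $t^1$ cannot occur unless $r=1$; if $c_2=0$ one passes to the first nonzero higher coefficient and the same obstruction appears, using that $\yy$ is non-constant so $\yy'\not\equiv 0$.) This shows $(\yy,\yy')$ is irreducible, hence defines a place.

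For (ii), suppose $\yy_1=\sum_{i\geq0}c_i t^i$ and $\yy_2=\sum_{i\geq0}d_i t^i$ are two non-constant solutions whose associated places $(\yy_1,\yy_1')$ and $(\yy_2,\yy_2')$ coincide. Then there is $S\in\K[[t]]$ with $\ord(S)=1$ such that $(\yy_1(S),\yy_1'(S))=(\yy_2(t),\yy_2'(t))$. In particular $\yy_2(t)=\yy_1(S(t))$, and differentiating with the chain rule gives $\yy_2'(t)=\yy_1'(S(t))\cdot S'(t)$. But the second component of the equivalence also says $\yy_2'(t)=\yy_1'(S(t))$. Subtracting, $\yy_1'(S(t))\,(S'(t)-1)=0$ in $\K[[t]]$. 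Since $\K[[t]]$ is an integral domain, either $S'(t)=1$ or $\yy_1'(S(t))=0$. The latter would force $\yy_1'\equiv 0$ (as $S$ is invertible under substitution), contradicting non-constancy of $\yy_1$ by the argument already used in (i). Hence $S'=1$, and together with $\ord(S)=1$ this gives $S(t)=t$, so $\yy_2=\yy_1$.

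I expect the main obstacle to be the irreducibility check in part (i): ruling out constancy of the components is a short consequence of the hypotheses on $F$, but showing that differentiation automatically yields a \emph{primitive} parametrization requires being careful about the case $c_2=0$ (and more generally when several initial derivative-coefficients vanish), where one must locate the first exponent actually present in $\yy'$ and observe it need not be divisible by a putative $r>1$. An alternative, cleaner route for irreducibility is to note that any local parametrization is equivalent to an irreducible one obtained by the substitution $t\mapsto t^{1/r}$ for the appropriate $r$, and that $\ord_{(c_0,c_1)}$ of the place would then divide $\ord(\yy-c_0)$; combined with the fact that $\yy'$ has order $\ord(\yy)-1$, which is \emph{one less} than $\ord(\yy)$, the gcd of the two component orders is forced to be $1$ unless $\yy$ is constant — making the parametrization automatically primitive. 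Part (ii) I expect to be routine once the chain-rule identity is written down.
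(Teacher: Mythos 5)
Your proof is correct and follows essentially the same route as the paper: part (i) verifies the local parametrization directly and gets irreducibility from the mod-$r$ incompatibility of the exponents of the two components (the paper phrases this as $\ord(\yy-c_0)=rn$ versus $\ord(\yy'-c_1)=s-1=rm$, forcing $r=1$), and part (ii) is exactly the paper's chain-rule computation $\yy_1'(S)S'=\yy_1'(S)$, which with $\yy_1'\neq 0$ and $\ord(S)=1$ forces $S=t$. Two small remarks: the verification that $\yy'$ is non-constant is superfluous (non-constancy of $\yy$ alone makes $(\yy,\yy')$ a local parametrization), and your parenthetical shortcut as well as the ``alternative route'' must combine information from \emph{both} components — an exponent of $\yy'$ divisible by $r$ is no contradiction by itself when $c_2=0$, and $\ord_{\mathbf{c}}(\yy'-c_1)$ need not equal $\ord(\yy-c_0)-1$ when $c_1\neq 0$ — but your main order-comparison argument already supplies exactly this pairing, so the proof stands.
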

\begin{proof}
	(i) $F(\tilde{y},\tilde{y}')=0$ and $\tilde{y}$ is non-constant by assumption. So $\mathcal{P}=(\tilde{y},\tilde{y}')$ defines a local parametrization at the center $(\tilde{y}(0),\tilde{y}'(0))=(c_0,c_1).$
	For proving the irreducibility of $\mathcal{P}$, let $$\tilde{y}=c_0+c_st^s+ \cdots ~\text{ with } c_s \neq 0,s>0$$ and let us assume that there exists another local parametrization $\mathcal{P}^*=(A,B)$ and an $r>1$ such that $\mathcal{P}=\mathcal{P}^*(t^r).$
	We express $A,B$ as $$A=a_0+a_nt^n+ \cdots, \quad B=b_0+b_mt^m+ \dots,$$ where $a_n,b_m \neq 0$.
	Then $$
	\begin{array}{l} \tilde{y}=c_0+c_st^s+ \cdots=A(t^r)=a_0+a_nt^{rn}+ \cdots,  \\
	\tilde{y}'=c_1+sc_st^{s-1}+ \cdots=B(t^r)=b_0+b_mt^{rm}+ \cdots.
	\end{array}$$ Hence, $s=rn$ and if $s=1,$ then $r=1$ follows. If $s>1$, we get $s-1=rm$ and therefore, $r(n-m)=1$ implying $r=1$. Thus, $\mathcal{P}$ is irreducible in both cases.
	
	\noindent (ii) Let $\yy_1,\yy_2\in \K[[t]]$ be solutions of (\ref{EQ-AODE}) defining the same place of $\cc$. Then, there exists $S\in \K[[t]]$ of order 1, such that $(\yy_1(S),\yy_1'(S))=(\yy_2,\yy_2')$.
	Therefore,
	\[ \yy_2'=(\yy_1(S))'=\yy_1'(S)S'=\yy_1'(S). \]
	Since $(\yy_1,\yy_1')$ is a local parametrization, we have that $\yy_1'\neq 0$, and hence $S'=1$. So, using that $\ord(S)=1$, we get that $S=t$. Thus $\yy_2=\yy_1$.
\end{proof}

Based on the previous lemma, we introduce the following concept.

\begin{Definition}
	We say that a place of $\cc$ is a \textit{solution place} if it is definable by a non-constant formal power series solution of (\ref{EQ-AODE}).
\end{Definition}

Taking into account that the orders of the  coordinates of all the local parametrizations in a place are the same, a necessary condition for a place $(A,B)$ for being a solution place is that $\ord(A')=\ord(B)$. This motivates the next concept.

\begin{Definition}
	Let $(A,B) \in \K[[t]]^2$ be a place of $\cc$. We say that $(A,B)$ is \textit{order-suitable} if $\ord(A')=\ord(B)$. Moreover, we call $\textbf{c} \in \mathcal{C}(F)$ a \textit{suitable center}, if there exists an order-suitable place centered at $\textbf{c}$.
\end{Definition}


The next theorem shows that the two previous notions are equivalent.

\begin{Theorem} \label{THM:main}
	Let $\cP$ be a place of $\cc$. Then $\cP$ is a solution place if and only if $\cP$ is a order-suitable place.
\end{Theorem}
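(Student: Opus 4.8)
The plan is to prove both implications. The forward direction is immediate from the discussion preceding the statement: if $\cP$ is a solution place, represented by $(\yy,\yy')$ for a non-constant solution $\yy$, then since the orders of the coordinates are invariant across the equivalence class, $\ord(\yy') = \ord((\yy)')$ must equal $\ord(\yy')$ — tautologically — so in fact the real content is that \emph{any} representative $(A,B)$ of the place satisfies $\ord(A') = \ord(B)$, which follows because some representative has this property and order of a coordinate is a place invariant. So the forward direction costs essentially nothing.

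For the converse, I would start with an order-suitable representative $(A,B)$ of $\cP$ with $\ord(A') = \ord(B) =: m \geq 1$, centered at $\textbf{c} = (c_0, c_1)$. The goal is to find an equivalent parametrization $(\tilde A, \tilde B)$ with $\tilde B = \tilde A'$; that parametrization is then the desired solution, since $F(\tilde A, \tilde B) = 0$ means $F(\tilde A, \tilde A') = 0$, i.e. $\tilde A$ is a formal power series solution of \eqref{EQ-AODE}, and it is non-constant because $A$ is non-constant. Concretely, I would look for a reparametrization $S \in \K[[t]]$ with $\ord(S) = 1$ such that $\bigl(A(S)\bigr)' = B(S)$. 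Expanding, $\bigl(A(S)\bigr)' = A'(S)\,S'$, so the equation to solve is
\begin{equation*}
A'(S)\cdot S' = B(S).
\end{equation*}
Writing $S = s_1 t + s_2 t^2 + \cdots$ with $s_1 \neq 0$, one compares coefficients. The leading term of $A'(S)$ has order $m$ in $t$ (order $m$ series composed with order $1$ series), the leading term of $B(S)$ also has order $m$, and $S'$ has order $0$; the lowest-order equation reads (up to the expected constants from the chain rule and from $S'$) a relation that determines $s_1$ from the leading coefficients of $A'$ and $B$, and this is exactly where order-suitability is used to guarantee $s_1 \neq 0$ — here one likely needs that $\textbf{c}$ is actually attained as a center with matching orders, possibly invoking the tangent-vector description \eqref{eq-tg}. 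After $s_1$ is fixed, each successive coefficient $s_{k+1}$ of $S$ is determined by a linear equation whose leading coefficient is a nonzero multiple of $s_1$ (coming from differentiating $s_{k+1}t^{k+1}$ and the nonvanishing leading coefficient of $A'$), so the recursion proceeds without obstruction and yields $S$ with $\ord(S) = 1$.

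I would then set $\tilde A = A(S)$, $\tilde B = B(S) = \tilde A'$; by the preliminaries $(\tilde A, \tilde B)$ is an irreducible local parametrization equivalent to $(A,B)$, hence represents $\cP$, and $\tilde A$ solves \eqref{EQ-AODE}. Uniqueness of the solution attached to the place is already Lemma \ref{LEM:para}(ii), so nothing more is needed. The main obstacle I anticipate is the base case of the recursion: showing that the leading-order equation for $s_1$ is genuinely solvable with $s_1 \neq 0$, i.e. that order-suitability really does force the leading coefficients of $A'$ and $B$ to be compatible (not merely that the orders agree). I expect this is handled by noting that the tangent direction of $\cP$ at $\textbf{c}$ is $(a_m, b_m)$ with both components the genuine leading coefficients when $\ord(A') = \ord(B)$, together with the fact that $(A,B)$ parametrizes $\cc$; one may also need to treat separately the degenerate possibility that $A$ is constant, but that is excluded since a solution place comes from a non-constant $\yy$, and in the converse direction $A$ constant would force $A' = 0$, contradicting $\ord(A') = m < \infty$. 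A secondary point to check carefully is that $S'$ being a unit (order $0$, constant term $s_1 \neq 0$) is what keeps every step of the recursion linear and non-degenerate.
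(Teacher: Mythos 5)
Your proposal is correct and follows essentially the same route as the paper: take a representative $(A,B)$ of the order-suitable place, solve $A'(S)\,S'=B(S)$ for a reparametrization $S$ with $\ord(S)=1$ by coefficient comparison, and conclude that $A(S)$ is the desired non-constant solution, with uniqueness already supplied by Lemma \ref{LEM:para}(ii). Two small corrections: order-suitability only gives $\ord(A')=\ord(B)=:m\geq 0$ (indeed $m=0$ whenever $c_1\neq 0$), and the ``compatibility'' worry you flag for the base case dissolves without any appeal to the tangent vector — factor $S^m$ out of both sides and multiply by the inverse of the unit $a_m+a_{m+1}S+\cdots$ to rewrite the equation as $S'=(a_m+a_{m+1}S+\cdots)^{-1}(b_m+b_{m+1}S+\cdots)$, which immediately yields $s_1=b_m/a_m\neq 0$ and a clean linear recursion for the remaining $s_i$, exactly as in the paper's proof.
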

\begin{proof}
	Clearly a solution place is an order-suitable place.
	
	Conversely, let $\cP$ be order-suitable and let $(A,B) \in \K[[t]]^2$ be a representative of $\cP$.
	Let $S$ be a formal power series of order at least 1, with unknown coefficients. Since $\cP$ is of suitable order, $A'(S)S'=B(S)$ can be expressed as $$(a_kS^k+a_{k+1}S^{k+1}+\cdots) \cdot S'= b_kS^k+b_{k+1}S^{k+1} \cdots$$ with $a_kb_k \neq 0.$
	Now factoring out $S^k$ on both sides and multiplying by the inverse of the series $a_k+a_{k+1}S+\cdots$ we get the equivalent differential equation
	\begin{equation} \label{EQU:repara}
	S'=(a_k+a_{k+1}S+\cdots)^{-1}(b_k+b_{k+1}S+\cdots).
	\end{equation}
	Following the idea of the method of Limits, see e.g. Section 12.2. in \cite{Ince1926}, and comparing coefficients we obtain a unique formal power series solution $S=\sum_{i \geq 1}s_it^i$ with $\ord(S)=1$. For the constant coefficients in (\ref{EQU:repara}) we have $s_1=\frac{b_k}{a_k} \neq 0.$ The $i$-th coefficient with $i \geq 1$ on the left hand side is equal to $(i+1)s_{i+1}$. On the right hand side, due to the well-known formulas for the substitution, product and inversion of formal power series, we obtain a polynomial in $s_1,\ldots,s_{i+1}$ (and $a_k,\ldots,a_{k+i},b_k,\ldots,b_{k+i}$). Hence, the coefficients of $S$ are recursively given.
	Thus, $F(A(S),(A(S))')=F(A(S),B(S))=0$ and $A(S)$ is a solution of (\ref{EQ-AODE}). Let us assume that $A(S)$ is constant. This is the case if and only if $A$ is constant. Then also $B(S)=(A(S))'=0$ and therefore $B=0$ in contradiction to the assumption that $(A,B)$ is a local parametrization. Thus $\cP$ is a solution place.
\end{proof}

\begin{Remark}\label{Rem:1to1}
	In the previous proof, we have seen that if one starts from a local parametrization $(A,B)$ with coefficients in a subfield $\mathbb{L}$ of $\K$, then $S$ has coefficients in $\mathbb{L}$. Hence, the non-constant formal power series solution $A(S)$ belongs to $\mathbb{L}[[t]]$.
\end{Remark}

\begin{Theorem}\label{Thm:analytic}
	Let $\K=\C$. Then, all formal power solutions of (\ref{EQ-AODE}) are analytic.
\end{Theorem}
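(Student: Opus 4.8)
The plan is to reduce the analyticity of an arbitrary formal power series solution $\tilde y$ of \eqref{EQ-AODE} to two classical convergence facts: (a) the local parametrizations (places) of an affine algebraic curve over $\C$ can be represented by \emph{convergent} power series, and (b) the reparametrization series $S$ constructed in the proof of Theorem~\ref{THM:main} is convergent whenever its defining data are. First I would invoke Lemma~\ref{LEM:para}(i): since $\tilde y$ is non-constant, $\mathcal{P}=(\tilde y,\tilde y')$ is a place of $\cc$, and by Lemma~\ref{LEM:para}(ii) together with Theorem~\ref{THM:main} it is, up to the order-one substitution exhibited there, obtained from \emph{some} representative $(A,B)$ of that place via $\tilde y = A(S)$, $\tilde y' = B(S)$. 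The strategy is therefore to choose the representative $(A,B)$ to be analytic, to show $S$ is analytic, and to conclude that the composition $A(S)$ is analytic.

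The key steps, in order, are: (1) Recall the classical fact (e.g.\ from the theory of algebraic functions / Puiseux expansions over $\C$, see \cite{walker1950algebraic}) that every place of an algebraic curve over $\C$ admits a representative $(A,B)$ whose coordinates are convergent power series in a neighborhood of $0$; fix such an analytic representative of the place defined by $\tilde y$. (2) Reexamine the construction of $S$ in the proof of Theorem~\ref{THM:main}: $S$ is the unique order-one solution of the first-order ODE \eqref{EQU:repara}, whose right-hand side is $(a_k+a_{k+1}S+\cdots)^{-1}(b_k+b_{k+1}S+\cdots)$. Since $(A,B)$ is analytic, the series $a_k + a_{k+1}w + \cdots$ and $b_k + b_{k+1}w + \cdots$ are analytic in $w$ near $0$, and $a_k\neq 0$, so the right-hand side of \eqref{EQU:repara} is an analytic function of $(t,S)$ near $(0,0)$; by the Cauchy–Kovalevskaya / classical existence theorem for analytic ODEs (the ``method of majorants/limits'' already referenced via \cite{Ince1926}), the formal solution $S$ is in fact convergent. (3) Since $\ord(S)=1$, $S$ maps a neighborhood of $0$ into a neighborhood of $0$, so the composition $A\circ S$ is a convergent power series; hence $\tilde y = A(S)$ is analytic, and so is $\tilde y' = B(S)$. (4) Finally, note the constant solutions are trivially analytic, so all formal power series solutions of \eqref{EQ-AODE} are analytic.

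The main obstacle is Step~(2): one must upgrade the \emph{formal} uniqueness/existence argument of Theorem~\ref{THM:main} to a genuine analytic statement. Concretely, once \eqref{EQU:repara} is rewritten as $S' = \Phi(S)$ with $\Phi$ analytic near $0$ and $\Phi(0) = b_k/a_k$, one needs the standard majorant-series estimate showing the recursively determined coefficients $s_i$ satisfy a geometric bound $|s_i|\le C\rho^{-i}$. This is exactly the content of the classical analytic ODE existence theorem, but since the paper has invoked the method of limits only at the level of formal coefficient comparison, I would either cite \cite{Ince1926} (or a Cauchy–Kovalevskaya reference) for the convergence, or sketch the majorant argument: replace each series $a_k+a_{k+1}w+\cdots$ and $b_k+b_{k+1}w+\cdots$ by a common geometric majorant, solve the resulting scalar ODE explicitly, and observe its (convergent) solution dominates $S$ coefficientwise. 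A secondary point worth a remark is that the analytic representative $(A,B)$ of the place in Step~(1) and the one implicitly used in Theorem~\ref{THM:main} differ only by an order-one substitution, which preserves convergence, so no compatibility issue arises.
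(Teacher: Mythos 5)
Your proposal is correct and follows essentially the same route as the paper: choose an analytic (convergent Puiseux/Newton--Puiseux) representative $(A,B)$ of the place defined by $\tilde y$, use the classical analytic existence theorem (method of limits, \cite{Ince1926}) to conclude that the reparametrization $S$ solving \eqref{EQU:repara} is convergent, and then obtain $\tilde y=A(S)$ as a composition of convergent series, with constant solutions handled trivially. Your added remarks (the explicit appeal to Lemma~\ref{LEM:para}(ii) to identify $\tilde y$ with $A(S)$, and the optional majorant sketch) only make explicit what the paper leaves implicit.
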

\begin{proof}
	The statement is trivial for the case of constant solutions. Let $\yy$ be a non-constant formal power solution of (\ref{EQ-AODE}).
	By Lemma \ref{LEM:para}, there is a unique solution place $\cP$ defined by $\yy$. On the other hand, let $(A(t),B(t))$ be the irreducible local parametrization of $\cP$ generated by the Newton-Puiseux method (see e.g. \cite{walker1950algebraic}).
	By \cite{casas-alvero_2000}[Theorem 1.7.2], we get that $A,B$ are analytic.
	Furthermore, from Section 12.2 in \cite{Ince1926}, we know that (\ref{EQU:repara}) has a convergent solution $S$ in a certain neighborhood of its center if $A$ and $B$ are convergent.  Thus, $\yy=A(S)$ is analytic.
\end{proof}

\section{Critical Sets}\label{sec:critical}
In Section \ref{sec-main} we have seen that the non-constant formal power series solutions of (\ref{EQ-AODE}) are related to the places in $\cc$ that are of suitable order.
To proceed algorithmically with the ideas in Section \ref{sec-main}, we need to ensure that almost all, i.e., all but finitely many, points in $\cc$, are suitable centers; note that, as already observed, points in $\K^2\setminus \cc$ cannot generate solutions. For this purpose, we introduce the notion of critical set.

\begin{Definition}\label{def:critical}
	We say that $\mathcal{S}\subset \cc$ is a \textit{critical set} if $\mathcal{S}$ is finite, and every point $\cc\setminus \mathcal{S}$ is the center of at least one solution place.
\end{Definition}

We observe that, because of Theorem \ref{THM:main}, a point in $\cc$ is the center of at least one  solution place if and only if it is of suitable order for (\ref{EQ-AODE}). In the following, we prove that there always exist critical sets for the equation (\ref{EQ-AODE}). For this, we first prove the next technical lemma.

\begin{Lemma}\label{lem:charac-order-suitable}
	Let $\mathbf{c}=(c_0,c_1) \in \cc$ and let $(A,B)$ be a local parametrization of $\cc$ centered at $\mathbf{c}$. $\mathbf{c}$ is an order-suitable center if and only if both (i) and (ii) hold, where
	\begin{enumerate}
		\item[(i)] if $c_1 \neq 0$ then $\ord_{\mathbf{c}}(A)=1$.
		\item[(ii)] if $c_1=0$ then $\ord_{\mathbf{c}}(A)=\ord_{\mathbf{c}}(B)+1$.
	\end{enumerate}
\end{Lemma}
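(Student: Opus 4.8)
The plan is to translate the condition ``$\operatorname{ord}(A') = \operatorname{ord}(B)$'' (the definition of order-suitable) into conditions on $\operatorname{ord}_{\mathbf c}(A)$ and $\operatorname{ord}_{\mathbf c}(B)$ by means of the elementary fact that for a non-constant power series $A = c_0 + a_n t^n + \cdots$ with $a_n \neq 0$ one has $\operatorname{ord}(A') = n-1 = \operatorname{ord}_{\mathbf c}(A) - 1$, provided $n \geq 1$, since $\operatorname{char} \K = 0$ guarantees $n a_n \neq 0$. The key observation linking this to the hypothesis is that the differential equation $F(y,y')=0$ forces $B = A'$ for the solution place, but here we only have a local parametrization, so instead I must use $B(0) = c_1$: the constant term of $B$ is the second coordinate of the center. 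So the proof splits on whether $c_1 = 0$, i.e. whether $\operatorname{ord}_{\mathbf c}(B) \geq 1$ or $\operatorname{ord}(B) = 0$.

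First I would set $n := \operatorname{ord}_{\mathbf c}(A)$ and $m := \operatorname{ord}_{\mathbf c}(B)$, noting $n \geq 1$ (and $n,m$ cannot both be $\infty$ since $(A,B)$ is a local parametrization; in fact since $\cc$ is not a vertical line, $A$ is non-constant so $n < \infty$). Since $\operatorname{char}\K = 0$, $\operatorname{ord}(A') = n - 1$. For $B$, we have $\operatorname{ord}(B) = 0$ exactly when $c_1 = B(0) \neq 0$, and $\operatorname{ord}(B) = m$ when $c_1 = 0$.

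\emph{Case $c_1 \neq 0$.} Then $\operatorname{ord}(B) = 0$, so order-suitability $\operatorname{ord}(A') = \operatorname{ord}(B)$ becomes $n - 1 = 0$, i.e. $\operatorname{ord}_{\mathbf c}(A) = 1$, which is condition (i); condition (ii) is vacuous. \emph{Case $c_1 = 0$.} Then $\operatorname{ord}(B) = m = \operatorname{ord}_{\mathbf c}(B)$, so order-suitability becomes $n - 1 = m$, i.e. $\operatorname{ord}_{\mathbf c}(A) = \operatorname{ord}_{\mathbf c}(B) + 1$, which is condition (ii); condition (i) is vacuous. In both cases the stated pair of conditions is equivalent to $\operatorname{ord}(A') = \operatorname{ord}(B)$, hence (by Theorem~\ref{THM:main} together with the Definition preceding it, or directly by the definition of order-suitable center) equivalent to $\mathbf c$ being an order-suitable center. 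One should also note the conditions are independent of the choice of representative $(A,B)$ in the place, since all representatives have the same orders and the same center; this is already recorded in the excerpt.

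I do not expect a genuine obstacle here — the lemma is essentially a bookkeeping translation. The only point requiring a little care is the role of characteristic zero in the identity $\operatorname{ord}(A') = \operatorname{ord}_{\mathbf c}(A) - 1$ (in positive characteristic the leading term of $A'$ could vanish), and the edge case where $A$ or $B$ is constant: $A$ constant is impossible because $\cc$ is not a vertical line, and if $B$ is constant then necessarily $c_1 = 0$ forces $B = 0$, contradicting that $(A,B)$ is a local parametrization, so in the relevant case $m < \infty$. I would state these two remarks explicitly and then carry out the two-case computation above.
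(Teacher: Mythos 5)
Your proof is correct and follows essentially the same route as the paper's: write $A=c_0+a_rt^r+\cdots$, $B=c_1+b_st^s+\cdots$, use $\ord(A')=\ord_{\mathbf{c}}(A)-1$ (valid in characteristic zero) and split on whether $c_1=0$ to translate $\ord(A')=\ord(B)$ into conditions (i) and (ii). Your added care about degenerate cases is fine and goes slightly beyond the paper's proof, with one small correction: $B=0$ (with $A$ non-constant) is not excluded by the definition of a local parametrization itself, but by the fact that $F$ is irreducible and not of the form $y'-\lambda$, so $\mathcal{C}(F)$ is not the line $z=0$.
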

\begin{proof}
	Let $A=c_0+a_r t^r+\cdots, B=c_1+a_st^s+\cdots$ where $a_rb_s\neq 0$. If $\mathbf{c}$ is order-suitable, then $\ord(A')=\ord(B)$. Therefore, if $c_1=0$ then $r-1=s$, and if $c_1\neq 0$ then $r=0$. Thus, (i) and (ii) hold.
	Conversely, if (i) and (ii) hold then
	$\ord(B)=0=r-1$ if $c_1\neq 0$ and $\ord(B)=c=r-1$ if $c_1=0$.
	Thus, $\ord(A')=\ord(B)$. So $\mathbf{c}$ is order-suitable.
\end{proof}

\begin{Corollary} \label{COR:suff}
	Let $\mathbf{c}=(c_0,c_1)\in \cc$ with $c_1\neq 0$. If $\mathbf{c}$ is simple and not of $z$-ramification, then it is a suitable center.
\end{Corollary}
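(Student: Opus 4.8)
The plan is to reduce the statement, via Lemma~\ref{lem:charac-order-suitable}, to a purely numerical fact about the orders of a place centered at $\mathbf{c}$. Since $c_1\neq 0$, condition~(ii) of that lemma is vacuous, so $\mathbf{c}$ is order-suitable --- equivalently, a suitable center --- precisely when a local parametrization $(A,B)$ of $\cc$ centered at $\mathbf{c}$ satisfies $\ord_{\mathbf{c}}(A)=1$. Thus the whole corollary comes down to showing that the first coordinate of a place at $\mathbf{c}$ has order exactly one.

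First I would exploit that $\mathbf{c}$ is simple. By the relationship recalled at the end of Section~\ref{sec-preliminaries} between the multiplicity of a point and the orders of the places centered there, $\mathbf{c}$ carries a single place $(A,B)$ whose order equals the multiplicity of $\cc$ at $\mathbf{c}$, namely $\min\{\ord_{\mathbf{c}}(A),\ord_{\mathbf{c}}(B)\}=1$. Because $\ord_{\mathbf{c}}(A)=\ord(A-c_0)\geq 1$ and likewise $\ord_{\mathbf{c}}(B)\geq 1$, this minimum being $1$ forces exactly one of two situations: either $\ord_{\mathbf{c}}(A)=1$, or $\ord_{\mathbf{c}}(A)>1$ and $\ord_{\mathbf{c}}(B)=1$. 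The latter is exactly the condition characterizing $\mathbf{c}$ as a $z$-ramification point, which is ruled out by hypothesis. Hence $\ord_{\mathbf{c}}(A)=1$, and Lemma~\ref{lem:charac-order-suitable}(i) yields that $\mathbf{c}$ is a suitable center.

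I do not expect a real obstacle here: the argument is just bookkeeping of orders together with the already-established dictionary between $z$-ramification and the orders of the local parametrization at a simple point. The only points requiring a moment's attention are that the two displayed cases are genuinely exhaustive --- which uses $\ord_{\mathbf{c}}(A),\ord_{\mathbf{c}}(B)\geq 1$ with the usual convention that the order at the center of a constant series is $\infty$ --- and the implicit fact that $(A,B)$ is an honest local parametrization at $\mathbf{c}$, i.e.\ $B$ is not constant, which follows from the irreducibility of $F$ together with $F\neq y'-\lambda$, since a constant $B=b_0$ would force $(z-b_0)\mid F$.
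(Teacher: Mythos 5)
Your proposal is correct and follows essentially the same route as the paper: use that a simple point carries a single place of order one, so $\min\{\ord_{\mathbf{c}}(A),\ord_{\mathbf{c}}(B)\}=1$, rule out $\ord_{\mathbf{c}}(A)>1=\ord_{\mathbf{c}}(B)$ by the non-$z$-ramification hypothesis, and conclude via Lemma~\ref{lem:charac-order-suitable}(i) using $c_1\neq 0$. The only difference is that you spell out the case analysis and the non-constancy of $B$, which the paper leaves to its Section~\ref{sec-preliminaries} remarks.
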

\begin{proof}
	Let $(A,B)$ be a local parametrization of $\mathcal{C}(F)$ centered at $\textbf{c}$. Since $\textbf{c}$ is simple, $\min(\ord_\textbf{c}(A),\ord_\textbf{c}(B))=1$ (see Section \ref{sec-preliminaries}). Since $\textbf{c}$ is not of $z$-ramification, $\ord_\textbf{c}(A)=1$ (see Section \ref{sec-preliminaries}). By assumption, $c_1 \neq 0$ and therefore, by Lemma \ref{lem:charac-order-suitable}, $\textbf{c}$ is a suitable center.
\end{proof}

We conclude the section by proving the existence of critical sets. For this, we will use the following notation. If $\mathcal{F}$ is a nonempty subset of $\K[y,z]$ we denote by $\mathbb{V}(\mathcal{F})$ the affine variety of $\K^2$ defined by $\mathcal{F}$; i.e. the zero-set of $\mathcal{F}$ over $\K$.

\begin{Theorem}\label{th-critical-set}
	$\mathbb{V}(\{F(y,z),z\})\cup\mathbb{V}(\{F(y,z),S_F(y,z)\}),$ where $S_F$ is the separant of $F$, is a critical set of $\cc$.
\end{Theorem}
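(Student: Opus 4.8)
The plan is to show that the set $\mathcal{S} = \mathbb{V}(\{F,z\}) \cup \mathbb{V}(\{F,S_F\})$ satisfies the two requirements of Definition \ref{def:critical}: finiteness, and that every point of $\cc \setminus \mathcal{S}$ is the center of at least one solution place. By Theorem \ref{THM:main} (and the remark following Definition \ref{def:critical}), the latter is equivalent to showing that every point in $\cc \setminus \mathcal{S}$ is an order-suitable center.

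First I would establish finiteness. Both $\mathbb{V}(\{F,z\})$ and $\mathbb{V}(\{F,S_F\})$ are intersections of $\cc$ with another curve (the line $z=0$ and the curve $\mathcal{C}(S_F)$, respectively). Since $\cc$ is irreducible and is neither the line $z=0$ (it is not a vertical line in the introduced coordinates, and not the $y$-axis) nor a component of $\mathcal{C}(S_F)$ — the latter because $\deg_{y,z}(S_F) = \deg_{y,z}(F) - 1 < \deg_{y,z}(F)$ and $F$ is irreducible, so $F \nmid S_F$, while $S_F \not\equiv 0$ since $F \notin \K[y]$ — Bézout's theorem gives that each intersection is finite. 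Hence $\mathcal{S}$ is finite. (This reasoning is essentially already spelled out in the preliminaries.)

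Next, take $\mathbf{c} = (c_0,c_1) \in \cc \setminus \mathcal{S}$ and let $(A,B)$ be a local parametrization of $\cc$ centered at $\mathbf{c}$; I want to verify conditions (i) and (ii) of Lemma \ref{lem:charac-order-suitable}. Since $\mathbf{c} \notin \mathbb{V}(\{F,S_F\})$, the point $\mathbf{c}$ is a simple point of $\cc$ (a singular point of $\cc$ is a common zero of $F$ and both partials, in particular of $F$ and $S_F = \partial F/\partial z$); consequently $\min(\ord_{\mathbf{c}}(A),\ord_{\mathbf{c}}(B)) = 1$. Also, since $\mathbf{c} \notin \mathbb{V}(\{F,z\})$, we have $c_1 \neq 0$. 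It then remains to check condition (i): $\ord_{\mathbf{c}}(A) = 1$. Equivalently, by Corollary \ref{COR:suff}, it suffices to show $\mathbf{c}$ is not a $z$-ramification point. Here is where the hypothesis $\mathbf{c} \notin \mathbb{V}(\{F,S_F\})$ is used a second time: at a $z$-ramification point the tangent to $\cc$ is parallel to the $z$-axis, which forces the $y$-component of the gradient $(\partial F/\partial y, \partial F/\partial z)$ to vanish there, i.e. $\partial F/\partial y (\mathbf{c}) = 0$; combined with $F(\mathbf{c}) = 0$ this places $\mathbf{c}$ on a finite set. I should double-check whether this finite set is exactly $\mathbb{V}(\{F, S_F\})$ or whether the argument instead needs the simple-point characterization of the tangent direction; the cleanest route is to invoke Corollary \ref{COR:suff} directly once I know $\mathbf{c}$ is simple with $c_1 \neq 0$, since Corollary \ref{COR:suff} already packages ``simple and not $z$-ramification $\Rightarrow$ suitable,'' and the $z$-ramification points of a simple curve, having horizontal tangent, are among the zeros of $\{F, \partial F/\partial y\}$ — but I must reconcile this with the stated critical set, which uses $S_F = \partial F/\partial z$, not $\partial F/\partial y$.

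The main obstacle, then, is precisely this last point: ensuring that $z$-ramification points are really excluded by removing $\mathbb{V}(\{F,S_F\})$ rather than by removing zeros of $\partial F/\partial y$. The resolution is that at a $z$-ramification point $\mathbf{c}$ — a simple point whose place $(A,B)$ has $\ord_{\mathbf{c}}(A) > \ord_{\mathbf{c}}(B) = 1$ — the parametrization shows $B' = (A)' \cdot (\text{something})$... more carefully: differentiating $F(A,B)=0$ gives $\partial_y F(A,B)\, A' + \partial_z F(A,B)\, B' = 0$; at a $z$-ramification point $\ord(A') > \ord(B') = 0$, and evaluating orders forces $\partial_z F(A,B)$ to have positive order, i.e. $S_F(\mathbf{c}) = 0$, so indeed $\mathbf{c} \in \mathbb{V}(\{F,S_F\})$. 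This is the key computation to get right; once it is in place, Lemma \ref{lem:charac-order-suitable}(i) is satisfied, hence $\mathbf{c}$ is order-suitable, hence (Theorem \ref{THM:main}) the center of a solution place, completing the proof that $\mathcal{S}$ is a critical set.
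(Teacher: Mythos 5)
Your proof is correct and follows essentially the same route as the paper: finiteness of $\mathbb{V}(\{F,z\})\cup\mathbb{V}(\{F,S_F\})$ via B\'ezout, and order-suitability of every point of $\cc$ off this set via Corollary \ref{COR:suff} (hence a solution place by Theorem \ref{THM:main}). Your concluding chain-rule computation, showing that a $z$-ramification point must satisfy $S_F(\mathbf{c})=0$ and so lies in $\mathbb{V}(\{F,S_F\})$ just as the singular points do, correctly settles the tangent-direction worry you raised mid-proof (the earlier passing claim that $\partial F/\partial y$ vanishes at a $z$-ramification point names the wrong partial) and simply makes explicit a step the paper leaves implicit when it applies Corollary \ref{COR:suff} directly from $S_F(\mathbf{c})\neq 0$ and $c_1\neq 0$.
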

\begin{proof}
	Let $\mathcal{S}=\mathbb{V}(\{F(y,z),z\})\cup\mathbb{V}(\{F(y,z),S_F(y,z)\})$. If $\mathbf{c}=(c_0,c_1)\in \cc\setminus \mathcal{S}$, then $F(\mathbf{c})=0$, $c_1\neq 0$, and $S_F(\mathbf{c})\neq 0$.
	Therefore, by Corollary \ref{COR:suff}, we get that $\mathbf{c}$ is a suitable center.
	
	The fact that $\mathcal{S}$ is finite follows by applying B\'ezout's theorem as follows: $F$ is irreducible, and by our general assumptions $F\neq z$. So, $\mathbb{V}(\{F(y,z),z\})$ is finite.
	Moreover, $\deg_{y,z}(S_F)<\deg_{y,z}(F)$ and $S_F$ is not zero, because under our hypotheses $F$ does depend on $z$. Thus, $\mathbb{V}(\{F(y,z),S_F(y,z)\})$ is finite too.
\end{proof}

\begin{Remark}\label{rem:puntos-eje}
	We observe that the places centered at the points in $\mathbb{V}(\{F,z\})\setminus \mathbb{V}(\{F,S_F\})$ are not solution places. Indeed, let $\mathbf{c}$ be in the previous set. Then, $c_1=0$, and $S_F(\mathbf{c})\neq 0$. So, by equation (\ref{eq-separant}), there exists a unique formal power solution of (\ref{EQ-AODE}) initialized at $\mathbf{c}$. But, $\yy=c_0$ is indeed a constant solution. Thus, the place centered at $\mathbf{c}$ is not a solution place.
\end{Remark}

\section{Algorithms and Examples} \label{sec-alg}

In this section, using the results from Section \ref{sec-main}, we derive two different algorithms  and illustrate them by some examples. In addition, we compare our method with the direct method (see Section \ref{sec:introduction}) by commenting on a main difference and illustrate it with an example.


Given $\mathbf{c}\in \K^2$, the first algorithm computes truncated expressions for all formal power series solutions of (\ref{EQ-AODE}), with $\mathbf{c} \in \K^2$ as initial tuple, so that the output truncation of any two different solutions do not coincide.
The second algorithm classifies all initial tuples $\mathbf{c}\in \cc$ by the number of distinct formal power series solutions initialized at $\mathbf{c}$.

In both algorithms, we focus on non-constant solutions. Note that the constant solutions are precisely the points on $\cc$ with a zero in the second component.

For this purpose, we will use two auxiliary algorithms. The first auxiliary algorithm, denoted by \textsf{LocalParametrization}$(F,\mathbf{c})$, provides a set containing the truncation of a  representative of each place centered at $\mathbf{c}\in \cc$, so that any two output truncated local parametrizations do not coincide. More precisely, the algorithm works at follows. After a suitable change of coordinates we may assume w.l.o.g. that $\mathbf{c}$ is the origin. In this situation, we apply to the irreducible polynomial $F\in K[y][z]$ the algorithm described in \cite{walker1950algebraic}[Section 3.2.] to compute the set $\mathcal{R}$ of all roots of $F$ up to a given order $N$, in the field of formal Puiseux series. Alternatively one may use the rational Puiseux expansion algorithm described in \cite{Duval1989}.
In \cite{Duval1989} (see also \cite{RISC4119}), upper bounds for the number of terms of the singular part of the Puiseux expansions in $\mathcal{R}$ are given; for instance $2(\deg_y(F)-1)\deg_z(F)+1$ is an upper bound.
So, $N$ can be taken bigger or equal to this quantity. In this situation, we recall that the positive order elements in $\mathcal{R}$ correspond to the places of $\cc$ centered at $\mathbf{c}$ (see Theorem 4.1. in \cite{walker1950algebraic}[page 107]).
Thus, the positive order elements in $\mathcal{R}$ provide (truncated) local parametrizations of the form $(c_0+\lambda t^k,B)$ for some $k \in \N, \lambda\in \K$, and $B \in \K[[t]]$.
Finally, analyzing equivalences among these (truncated) local parametrizations one gets the (truncated) places at $\mathbf{c}$ (see e.g. \cite{Duval1989} or \cite{RISC4119}).

For the second auxiliary algorithm, we recall that, for a given $N \in \N$, in the proof of Theorem \ref{THM:main} we have seen how to compute the coefficients $s_1,\dots,s_N$ of the reparametrization $S=\sum_{i \geq 1}s_it^i$, namely with the ansatz of unknown coefficients and coefficient comparison.
Let $(A,B)$ be a solution place with $A=c_0+a_rt^r+\cdots, B=c_1+b_st^s+\cdots$ and $a_rb_s\neq 0$. We note that only $\{a_r,\ldots,a_{r+i}\}$ and $\{c_1,b_s\ldots,b_{s+i-1}\}$ are required to compute the $i$-th coefficient of $S$. Hence, in order to compute (truncated) solutions up to order $N$, the first $N$ coefficients of the local parametrization have to be computed. We will refer to this constructible method as \textsf{Reparametrization}$(A,B,N)$.

If $\K=\C$, starting from the representation of a local parametrization as mentioned above, one can decide whether this branch is real or not and transform it into a real one by the substitution $t \mapsto \lambda t$, where $\lambda$ is a root of unity; see \cite{alonso1992}[Section 5.10]. Moreover, by Remark \ref{Rem:1to1}, we can check whether the solutions are real for a given initial tuple.


Now, we proceed with the description of the proposed algorithms.

\begin{breakablealgorithm}
	\caption{} \label{ALG:GivenInitialCoeffs}
	\begin{algorithmic}[1]
		\INPUT A first-order autonomous AODE $F(y,y')=0$, satisfying the general assumptions of (\ref{EQ-AODE}), and an initial tuple $\textbf{c}=(c_0,c_1) \in \K^2$.
		\OUTPUT A set consisting of the truncations of all non-constant formal power series solutions with initial tuple $(c_0,c_1)$ such that for every two different solutions the output truncations are different.
		\State Set $\mathcal{U}_{\textbf{c}}=\emptyset$.
		\If {$F(\textbf{c})=0$} compute $\mathcal{R}:=\textsf{LocalParametrization}(F,\mathbf{c})$.
		\For {every $(A,B)\in \mathcal{R}$}
		\If {$\ord(A')=\ord(B)$}
		\State $S:=\textsf{Reparametrization}(A,B,\max\{\deg(A'),\deg(B)\})$
		\State Set $\mathcal{U}_{\textbf{c}}=\mathcal{U}_{\textbf{c}} \cup \{A(S)\}$.
		\EndIf
		\EndFor
		\EndIf \\
		\Return $\mathcal{U}_{\mathbf{c}}$.
	\end{algorithmic}
\end{breakablealgorithm}

\begin{Theorem}\label{thm:alg-1}
	Algorithm \ref{ALG:GivenInitialCoeffs} is correct.
\end{Theorem}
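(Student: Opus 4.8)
The plan is to verify two things: termination and correctness of the output set. Termination is essentially immediate, since every loop is finite — the set $\mathcal{R}$ produced by \textsf{LocalParametrization}$(F,\mathbf{c})$ is finite (it consists of truncated representatives, one per place centered at $\mathbf{c}$, and there are only finitely many places at any point of an algebraic curve), and \textsf{Reparametrization} is a terminating procedure by construction. So the bulk of the argument concerns correctness.

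First I would handle the trivial branch: if $F(\mathbf{c})\neq 0$, then $\mathbf{c}\notin\cc$, and by Lemma \ref{LEM:para}(i) no non-constant solution can have initial tuple $\mathbf{c}$ (such a solution would force $(c_0,c_1)\in\cc$); the algorithm correctly returns $\emptyset$. Assume now $F(\mathbf{c})=0$. I would argue the two inclusions separately. For \emph{soundness}: every element added to $\mathcal{U}_{\mathbf{c}}$ is of the form $A(S)$, where $(A,B)$ is (a truncation of) a place of $\cc$ at $\mathbf{c}$ satisfying $\ord(A')=\ord(B)$, i.e.\ an order-suitable place; by Theorem \ref{THM:main} this is a solution place, and following the construction in the proof of that theorem, $A(S)$ is a non-constant formal power series solution of (\ref{EQ-AODE}), with $S$ of order $1$, hence $A(S)$ has the same center, namely initial tuple $(c_0,c_1)$. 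For \emph{completeness}: if $\yy$ is any non-constant solution with initial tuple $\mathbf{c}$, then by Lemma \ref{LEM:para}(i) the pair $(\yy,\yy')$ defines a solution place $\cP$ centered at $\mathbf{c}$, which by Theorem \ref{THM:main} is order-suitable; the representative of $\cP$ appears (up to truncation and up to equivalence) as some $(A,B)\in\mathcal{R}$, it passes the test $\ord(A')=\ord(B)$ because order-suitability is a property of the place and not of the chosen representative, and by Lemma \ref{LEM:para}(ii) the solution $A(S)$ produced from this place equals $\yy$. So $\yy$'s truncation is in $\mathcal{U}_{\mathbf{c}}$.

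The more delicate point — and what I expect to be the main obstacle — is justifying that the \emph{truncation} of $A(S)$ computed to degree $\max\{\deg(A'),\deg(B)\}$ is (a) well-defined from the truncated data \textsf{LocalParametrization} actually returns, and (b) fine enough to separate any two distinct solutions. For (a) I would invoke the observation made just before the statement of the algorithm: the $i$-th coefficient of $S$ depends only on $a_r,\dots,a_{r+i}$ and $c_1,b_s,\dots,b_{s+i-1}$, so the finitely many coefficients of the local parametrization that \textsf{LocalParametrization} returns suffice to recover $S$, and hence $A(S)$, up to the required order. For (b) I would argue that two distinct solution places give, via the bijection of Theorem \ref{THM:main} / Lemma \ref{LEM:para}(ii), two distinct solutions, and conversely two distinct solutions come from two distinct places whose truncated local parametrizations were already made pairwise distinct by \textsf{LocalParametrization}; since reparametrization by a unit $S$ is injective on the level of places, the resulting truncated solutions differ within the chosen bound. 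I would close by remarking that constant solutions are intentionally excluded (they are exactly the points of $\cc$ on $z=0$, flagged separately in the preamble to the algorithms), so the output set is exactly the set of truncations of all non-constant solutions with initial tuple $\mathbf{c}$, with distinct solutions yielding distinct truncations, as claimed.
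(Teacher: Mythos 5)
Your overall architecture (termination, soundness of every output element via Theorem \ref{THM:main}, completeness via Lemma \ref{LEM:para}, plus the observation that finitely many coefficients of $(A,B)$ determine $S$ up to the needed order) matches the paper's proof, and you correctly flag the delicate point. But your handling of that point, your item (b), has a genuine gap. You conclude with ``since reparametrization by a unit $S$ is injective on the level of places, the resulting truncated solutions differ within the chosen bound.'' Injectivity of the correspondence place $\mapsto$ solution (Lemma \ref{LEM:para}(ii)) gives distinctness of the \emph{full} power series solutions; it says nothing about \emph{where} the first differing coefficient of $A(S)$ versus $\tilde{A}(\tilde{S})$ occurs. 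Since the algorithm outputs truncations at order $\max\{\deg(A'),\deg(B)\}$, one must rule out that the substitution pushes the first discrepancy beyond that order, in which case two distinct solutions would yield identical outputs and the specification would fail. The phrase ``within the chosen bound'' is exactly the claim that needs proof, and you assert it rather than prove it.

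This is where the paper spends most of its effort: it writes the computed truncations as $(c_0+t^{r},\,c_1+b_kt^k+\cdots+b_{k+N}t^{k+N})$ with $N$ at least the multiplicity of $\cc$ at $\mathbf{c}$, so that by \cite{RISC4119} two distinct places already differ within the computed terms; then, assuming $r=\tilde{r}$ (the case $r\neq\tilde{r}$ being immediate) and letting $m$ be the first index where the $B$-coefficients differ, it tracks the recursion coming from (\ref{EQU:repara}) to show that $s_1,\dots,s_m$ and $\tilde{s}_1,\dots,\tilde{s}_m$ coincide while $s_{m+1}\neq\tilde{s}_{m+1}$ (the only differing term in the recursion being $b_{k+m}s_1^m/a_k$ versus $\tilde{b}_{k+m}s_1^m/a_k$), and finally that this forces $[t^{r+m}]A(S)\neq[t^{r+m}]\tilde{A}(\tilde{S})$, a coefficient that lies within the output truncation. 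Some argument of this quantitative, coefficient-tracking kind (or an explicit a priori bound on the order at which two distinct solutions must separate) is indispensable; without it your proof establishes that the output contains a truncation of every non-constant solution at $\mathbf{c}$, but not that distinct solutions produce distinct output truncations.
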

\begin{proof}
	Since $\mathcal{R}$ is a finite set and \textsf{LocalParametrization} and \textsf{Reparametrization} terminate, the algorithm terminates.
	Let $\mathcal{S}$ be the set of all non-constant formal power solutions of (\ref{EQ-AODE}) at $\mathbf{c}$.
	By Theorem \ref{THM:main}, $\mathcal{U}_{\mathbf{c}}$ contains the truncations of the elements in $\mathcal{S}$.
	
	Now, let us prove that $\#(\mathcal{S})=\#(\mathcal{U}_{\mathbf{c}})$.
	We first observe that $\mathcal{R}$ contains as many elements as places centered at $\mathbf{c}$, and each of them is a truncation of a place.
	
	The computed truncations can be expressed in the form $(A,B)=(c_0+t^r,c_1+b_kt^k+\dots+b_{k+N}t^{k+N})$ and are distinct due to \cite{RISC4119}[Lemma 1 and Lemma 2], where the multiplicity of $\cc$ at $\mathbf{c}$ is an upper bound for the calculated coefficients until the first distinct appears. Clearly, $N \geq \mult_{\mathbf{c}}(\cc)$ and therefore all computed (truncations of) local parametrizations are different.
	
	Let $(\tilde{A},\tilde{B})=(c_0+t^{\tilde{r}},c_1+\tilde{b}_{\tilde{k}}t^{\tilde{k}}+\dots+\tilde{b}_{\tilde{k}+N}t^{\tilde{k}+N})$ be another computed truncation. If $\tilde{r} \neq r$, then obviously the truncated solutions obtained via them are different. So let us assume that $\tilde{r}=r$ and set $\tilde{k}=k$.
	Moreover, let $m$ be the first index such that $b_{s+m} \neq \tilde{b}_{s+m}$. Then the $(m+1)$-st coefficient $s_{m+1}$ and $\tilde{s}_{m+1}$ of the reparametrizations $S$ and $\tilde{S}$ are the first different coefficients; note that in (\ref{EQU:repara}) by the recursive construction for $s_{m+1}$ and $\tilde{s}_{m+1}$ all terms on the right hand side are the same except for the only summand where $b_{k+m}$ ($\tilde{b}_{k+m}$ respectively) appears, namely $\frac{b_{k+m}s_1^m}{a_k}$ and $\frac{\tilde{b}_{k+m}s_1^m}{a_k}$.
	Therefore, the coefficients
	$$[t^{r+m}]A(S)=rs_1^{r-m}s_{m+1} + (\text{ terms in } s_1,\ldots,s_m)$$   $$[t^{r+m}]\tilde{A}(\tilde{S})=r\tilde{s}_1^{r-m}\tilde{s}_{m+1} + (\text{ terms in } \tilde{s}_1,\ldots,\tilde{s}_m)$$ are different.
\end{proof}

\begin{breakablealgorithm}
	\caption{} \label{ALG:NoInitialCoeffs}
	\begin{algorithmic}[1]
		\INPUT A first-order autonomous AODE $F(y,y')=0$ as in (\ref{EQ-AODE}).
		\OUTPUT A partition of the points of $\mathcal{C}(F)$ such that the number of non-constant formal power series solutions with this point as initial tuple is the same.
		\State Compute the variety $\mathcal{B}=\mathbb{V}(F,z) \cup \mathbb{V}(F,S_F)$ containing all critical points.
		\State For every critical point apply Algorithm \ref{ALG:GivenInitialCoeffs} to decide how many non-constant solutions are centered at it.
		\State Collect in the sets $\mathcal{A}_i$ the critical points providing $i$ non-constant solutions for $i \geq 0$.
		\State Replace $\mathcal{A}_1$ by $\mathcal{A}_1 \cup (\mathcal{C}(F) \setminus \mathcal{B})$.\\
		\Return $\mathcal{A}_0,\mathcal{A}_1,\ldots$ which are non-empty.
	\end{algorithmic}
\end{breakablealgorithm}

We want to emphasize that in Algorithm \ref{ALG:GivenInitialCoeffs} the set $\mathcal{A}_1$ has always infinitely many elements. However, we can give back this set in closed form as we indicate in the last step of the algorithm.

\begin{Theorem}\label{thm:alg-2}
	Algorithm \ref{ALG:NoInitialCoeffs} is correct.
\end{Theorem}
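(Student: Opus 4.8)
The plan is to verify two things about Algorithm~\ref{ALG:NoInitialCoeffs}: that it terminates, and that the returned sets $\mathcal{A}_0,\mathcal{A}_1,\dots$ do form a partition of $\cc$ for which $\mathbf{c}\in\mathcal{A}_i$ precisely when $\mathbf{c}$ is the initial tuple of exactly $i$ non-constant formal power series solutions of (\ref{EQ-AODE}).

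For termination I would first recall, exactly as in the proof of Theorem~\ref{th-critical-set}, that the variety $\mathcal{B}=\mathbb{V}(F,z)\cup\mathbb{V}(F,S_F)$ of step~1 is finite: this follows from B\'ezout's theorem together with the standing hypotheses that $F$ is irreducible, $F\neq z$, and $F$ genuinely depends on $z$ (so $S_F\neq0$ and $\deg_{y,z}(S_F)<\deg_{y,z}(F)$); it is effectively computable, e.g., via Gr\"obner bases or resultants. Hence the loop in step~2 ranges over finitely many points, each call to Algorithm~\ref{ALG:GivenInitialCoeffs} halts by Theorem~\ref{thm:alg-1}, and the remaining steps only manipulate finite data, so the procedure terminates.

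For correctness I would split $\cc$ into the finite critical locus $\mathcal{B}$ and its cofinite complement. If $\mathbf{c}\in\mathcal{B}$, then by Theorem~\ref{thm:alg-1} the set returned by Algorithm~\ref{ALG:GivenInitialCoeffs} on input $(F,\mathbf{c})$ has exactly as many elements as there are non-constant formal power series solutions with initial tuple $\mathbf{c}$ (these are finite in number, since by Lemma~\ref{LEM:para} each non-constant solution defines, injectively, one of the finitely many places centered at $\mathbf{c}$); hence step~3 assigns $\mathbf{c}$ to the correct set $\mathcal{A}_i$. If $\mathbf{c}=(c_0,c_1)\in\cc\setminus\mathcal{B}$, then $F(\mathbf{c})=0$, $c_1\neq0$, and $S_F(\mathbf{c})\neq0$; by the recursion (\ref{eq-separant}) there is then a \emph{unique} formal power series solution $\yy=\sum_{i\ge0}c_it^i$ with initial tuple $\mathbf{c}$, and it is non-constant because $\yy'=c_1+\cdots\neq0$, whereas every constant solution has vanishing second coordinate. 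Thus $\mathbf{c}$ is the initial tuple of exactly one non-constant solution, which is exactly what step~4 records by adjoining $\cc\setminus\mathcal{B}$ to $\mathcal{A}_1$. Since $\mathcal{B}$ and $\cc\setminus\mathcal{B}$ are disjoint and cover $\cc$, and the $\mathcal{A}_i$ produced in step~3 are pairwise disjoint by construction (a critical point yields one fixed number of solutions), the non-empty members of $\{\mathcal{A}_i\}$ partition $\cc$ with the asserted property, and only finitely many of them are non-empty. Finally I would note that $\mathcal{A}_1$ is always non-empty — indeed infinite, because $\cc$ is an infinite irreducible curve while $\mathcal{B}$ is finite — and that the algorithm returns it in the finite closed form $\mathcal{A}_1\cup(\cc\setminus\mathcal{B})$, so the output is genuinely given in finite terms.

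The only point that requires real care, beyond invoking Theorems~\ref{thm:alg-1} and \ref{th-critical-set}, is the justification of step~4: one must confirm that each non-critical point of $\cc$ carries \emph{exactly} one non-constant solution, neither none nor several. This is where the non-vanishing of the separant, the uniqueness furnished by (\ref{eq-separant}), and the elementary observation that constant solutions have zero second coordinate all come together. A secondary and routine check is that the classes $\mathcal{A}_i$ from step~3 are disjoint from $\cc\setminus\mathcal{B}$, which is immediate since they consist of critical points.
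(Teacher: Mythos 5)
Your argument is correct, and it follows the paper's overall structure (termination from the finiteness of $\mathcal{B}$ established in Theorem~\ref{th-critical-set}; counting at the critical points via Theorem~\ref{thm:alg-1}), but it diverges on the one step that the paper's own two-line proof leaves implicit: why every point of $\cc\setminus\mathcal{B}$ carries \emph{exactly} one non-constant solution, not merely at least one. The paper simply cites Definition~\ref{def:critical}, which by itself only guarantees at least one solution place at each non-critical point; the intended completion is geometric — at such a point $S_F\neq0$, so the point is smooth and not of $z$-ramification, hence there is a unique place centered there, it is order-suitable by Corollary~\ref{COR:suff}, it is a solution place by Theorem~\ref{THM:main}, and it carries at most one solution by Lemma~\ref{LEM:para}(ii). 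You instead obtain both existence and uniqueness analytically from the separant recursion (\ref{eq-separant}), together with the observation that $c_1\neq0$ forces the resulting solution to be non-constant; this is the same device the paper itself uses in Remark~\ref{rem:puntos-eje}, and it is arguably more elementary since it bypasses the place machinery for the cofinite stratum. Either route is valid; yours has the merit of making the ``exactly one'' claim behind step~4 explicit, while the paper's route keeps the whole proof inside the place-theoretic framework developed in Sections~\ref{sec-main} and~\ref{sec:critical}.
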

\begin{proof}
	By Theorem \ref{th-critical-set}, $\mathcal{B}$ is either empty or finite. So the algorithm terminates.
	Now, the correctness of Algorithm \ref{ALG:NoInitialCoeffs} follows from Def. \ref{def:critical} and Theorem \ref{thm:alg-1}.
\end{proof}


\begin{Example}\label{ex-1}
	Consider the AODE $$F(y,y')=y'^2-y^3-y^2=0$$ with $F \in \C[y,y'].$
	The corresponding curve $\mathcal{C}(F)$ is a rational cubic with a double point at the origin (see Fig. \ref{fig-C(F)} left).
	We apply Algorithm \ref{ALG:NoInitialCoeffs} to decompose the points of $\cc$ depending on the existence of non-constant analytic solutions.
	In Step 1, we get $\mathcal{B}=\mathbb{V}(\{z^2-y^3-y^2,z\})=\{(0,0),(-1,0)\}$. Let $\mathbf{c}_1:=(0,0)$ and $\mathbf{c}_2:=(-1,0)$. In Step 2 we apply Algorithm \ref{ALG:GivenInitialCoeffs} to the critical points.
	
	\para
	
	\noindent For $\mathbf{c}_1$ we get two places locally parametrized by $(t,t+\mathcal{O}(t^2))$ and $(t,-t+\mathcal{O}(t^2))$. So, $\mathcal{R}=\{(t,t),(t,-t)\}$. Since none of the places are order-suitable, we get $\mathcal{U}_{\mathbf{c}_1}=\emptyset.$
	
	\para
	
	\noindent
	For $\mathbf{c}_2$ we get the place $(A,B):=(t^2-1,t-t^3)$ which is order-suitable. Therefore, we apply Algorithm \textsf{Reparametrization} to $(A,B,3)$. We obtain $S=\frac{t}{2}-\frac{t^3}{24}+\frac{t^5}{240}.$ Thus, $\mathcal{U}_{\mathbf{c}_2}=\{A(S)\}=\{ -1+\frac{t^2}{4}-\frac{t^4}{24}\}$.
	
	\para
	
	Therefore, the output of the algorithm is $\mathcal{A}_0:=\{\mathbf{c}_1\}$, $\mathcal{A}_1:=\cc\setminus \{\mathbf{c}_1\}$, which means that:
	(1) at every point in $\mathcal{A}_1$ there exists exactly one non-constant formal power solution that is, indeed, analytic; (2) there is no non-constant analytic solution at $\mathbf{c}_1$; (3) $\{0,-1\}$ are the only constant solutions.
	
	
	Now, to further illustrate the method, we take $\mathbf{c}=(1,\sqrt{2}) \in \mathcal{A}_1$. A local parametrization at $\mathbf{c}$ is $(A,B)=(1+t,\sqrt{2}+\frac{5\sqrt{2}t}{4}+\frac{7\sqrt{2}t^2}{32}-\frac{3\sqrt{2}t^3}{128}+\mathcal{O}(t^4)).$
	Applying Algorithm \textsf{Reparametrization} to $(A,B,3)$ we get  $S=\sqrt{2}t+\frac{5t^2}{4}+\frac{2\sqrt{2}t^3}{3}+\mathcal{O}(t^4)$ that provides the solution
	$$A(S)= 1+\sqrt{2}t+\frac{5t^2}{4}+\frac{2\sqrt{2}t^3}{3}+\mathcal{O}(t^4).$$
\end{Example}

\begin{Example}\label{ex-2}
	Consider the following AODE
	$$F(y,y')=((y'-1)^2+y^2)^3-4(y'-1)^2y^2=0$$
	with $F \in \C[y,y']$.
	The corresponding curve $\mathcal{C}(F)$ is a rational degree $6$ curve with a non-ordinary  singularity at $\mathbf{c}=(0,1)$ (see Fig. \ref{fig-C(F)} right). We apply Algorithm \ref{ALG:NoInitialCoeffs}. In Step 1, we get
	$ \mathcal{B}= \{(0,1),(\alpha,0), (\frac{4\beta}{9},\gamma)\}$
	where $\alpha^6+3\alpha^4-\alpha^2+1=0$, $\beta^2=3,$ and $27\gamma^2-54\gamma +19=0$.
	\begin{figure}[ht]
		\centerline{ \includegraphics[width=4cm]{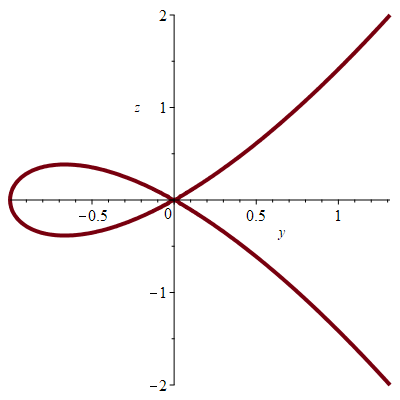} \hspace*{4mm}
			\includegraphics[width=4cm]{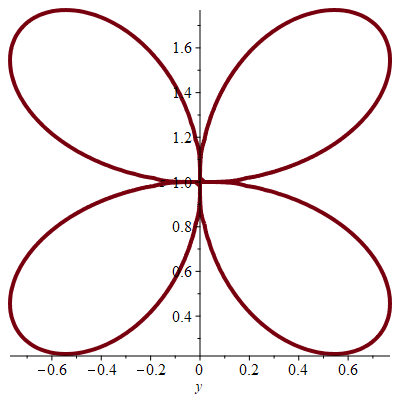} }
		\caption{Plot of $\mathcal{C}(F)$ of Example \ref{ex-1} (left) and Example \ref{ex-2} (right) } \label{fig-C(F)}
	\end{figure}
	In Step 2 we apply Algorithm \ref{ALG:GivenInitialCoeffs} to the critical points.
	
	\noindent Let $\mathbf{c}_1:=(0,1)$. We get the following four places
	\[ \begin{array}{ll}
	\cP_1 = (t^2,1+\sqrt{2}t+\mathcal{O}(t^2)) &
	\cP_2 = (-t^2,1-\sqrt{2}t+\mathcal{O}(t^2)) \\ \\
	\cP_3 = (t,1+\frac{t^2}{2}+\frac{3t^4}{16}+\mathcal{O}(t^6)) &
	\cP_4 = (t,1-\frac{t^2}{2}-\frac{3t^4}{16}+\mathcal{O}(t^6)).
	\end{array}
	\]
	So, $\mathcal{R}=\{
	(t^2,1+\sqrt{2}t),(-t^2,1-\sqrt{2}t),(t,1+\frac{t^2}{2}+\frac{3t^4}{16}), (t,1-\frac{t^2}{2}-\frac{3t^4}{16})\}.$ Observe that only
	$\cP_3,\cP_4$ are order-suitable. We apply to them Algorithm \textsf{Reparametrization}. One obtains
	$S=t+\frac{1}{6}t^3+\frac{17}{240}t^5$ for $\cP_3$, and $ S=t-\frac{1}{6}t^3+\frac{17}{240}t^4$ for $\cP_4$,
	that generate two solutions at $(0,-1)$, namely (truncated)
	\[ \yy_1=t+\frac{t^3}{6}+\frac{17t^5}{240},\,\, \yy_2=t-\frac{t^3}{6}+\frac{17t^5}{240}.
	\]
	Thus, $\mathcal{U}_{\mathbf{c}_1}=\{\yy_1,\yy_2\}$.
	
	\noindent
	Let $\mathbf{c}_\alpha:=(\alpha,0)$ where $\alpha^6+3\alpha^4-\alpha^2+1=0$. We get the place
	\[ \left(\alpha+t,\left(\frac{11}{19}\alpha^5+\frac{36}{19}\alpha^3+\frac{4}{19}\alpha\right)t+ \left(\frac{953}{13718}\alpha^4+\frac{4119}{13718}\alpha^2+\frac{1847}{6859}\right)t^2+ \cdots \right) \]
	which is not order-suitable (compare to Remark \ref{rem:puntos-eje}).
	Thus, $\mathcal{U}_{\mathbf{c}_\alpha}=\emptyset$.
	
	\noindent
	Let $\mathbf{c}_{\beta,\gamma}:=\left(\frac{4\beta}{9},\gamma\right)$, where $\beta^2=3,$ and $ 27\gamma^2-54\gamma +19=0$. We get the
	place
	$ ( \frac{4\beta}{9}-t^2,\gamma-\frac{\sqrt[4]{27}}{3}t+\cdots)$
	which is not order-suitable.
	Thus, $\mathcal{U}_{\mathbf{c}_{\beta,\gamma}}=\emptyset$.
	
	\para
	
	Therefore, the output of Algorithm \ref{ALG:NoInitialCoeffs} is $\mathcal{A}_0:=\{\mathbf{c}_\alpha,\mathbf{c}_{\beta,\gamma}\}$, $\mathcal{A}_1:=\cc \setminus \mathcal{B}$, and $\mathcal{A}_2:=\{\mathbf{c}_1\}$, where $\alpha,\beta,\gamma$ are as above. Moreover, the constant solutions are $\{\alpha\}$. To summarize: (1)
	At $\mathbf{c}_1$ there are two non-constant analytic solutions. (2) At each point in $\cc \setminus \mathcal{B}$ there is exactly 1 non-constant analytic solution.
	(3) At $\mathbf{c}_\alpha$ there is no non-constant analytic solution.
	(4) At $\mathbf{c}_{\beta,\gamma}$ there is no non-constant analytic solution.
\end{Example}

In the last part of the section we compare our method with the direct method (see Section \ref{sec:introduction}). The direct method consists in computing the solutions of $F(c_0,c_1)= \cdots = F^{(k)}(c_0,\dots,c_{k+1})=0$ in $\K$. However, finding an upper bound $M \in \N$ such that every solution of $F(c_0,c_1)= \cdots = F^{(M)}(c_0,\dots,c_{M+1})=0$ can be extended to a solution $\tilde{y}=\sum_{i \geq 0}c_it^i$ of $F(y,y')=0$ is in general an unsolved problem. This difficulty does no appear in our method as the following example illustrates:

\begin{Example}\label{ex:direct}
	Let us consider the family of irreducible AODEs $$G_m(y,y')=(y'-1)^2-y^{2m+1}=0,$$ where $m \in \N$.
	Set $F_1=(y'-1)^2$ and $F_2=y^{2m+1}$. Then $G_m=F_1-F_2$.
	It is straightforward to see that for each $k \geq 1$, the differential polynomial $F_1^{(k)}$ is the sum of a linear $\C^{*}$-combination of terms of the form $y^{(i)}y^{(j)}$, with $i+j=k+2$ and $i,j>1$, and the term $2(y'-1)y^{(k+1)}$.\\
	Moreover, for each $1 \leq k \leq 2m$, the differential polynomial $F_2^{(k)}$ is a linear $\C$-combination of terms of the form $y^{2m-i}f_i(y',\dots,y^{(k-i)})$, where $f_i$ is a non-constant monomial and $0 \leq i \leq k-1$. Hence, for $0 \leq k \leq m$ and $\textbf{c}=(0,1)$ the equations $$G_m^{(2k)}(\mathbf{c},c_2,\dots,c_{k+1})=0$$ hold if and only if $c_{k+1}=0$.
	However, since $$G_m^{(2m+1)}(\mathbf{c},0,\dots,0)=-(2m+1)! \neq 0,$$ $\mathbf{c}$ cannot be extended to a formal power series solution.
	This is exactly the result obtained by Theorem \ref{THM:main}: for every $m$ the curve $\mathcal{C}(G_m)$ has a tangent vector parallel to the $z$-axis at $\mathbf{c}$ and therefore is not a suitable center.
\end{Example}


\noindent \textbf{Acknowledgements.} The authors are supported by the Spanish Ministerio de Econom\'{\i}a y Competitividad, by the European Regional Development Fund (ERDF), under the MTM2017-88796-P. The first author is also supported by the strategic program "Innovatives O\"O 2020" by the Upper Austrian Government and by the Austrian Science Fund (FWF): P 31327-N32.

\appendix

\bibliographystyle{authordate1}

\end{document}